\documentclass[sigconf, nonacm]{acmart}

\usepackage{enumitem}
\usepackage{graphicx}
\usepackage{subcaption}
\usepackage{svg}
\usepackage{booktabs}
\usepackage{multirow}
\usepackage{tabularx}
\usepackage{xcolor}

\usepackage{mathtools}

\usepackage{amssymb}
\usepackage{amsthm}

\usepackage{algorithm}
\usepackage{algpseudocode}

\newtheorem{theorem}{Theorem}[section]

\newtheorem{definition}[theorem]{Definition}

\newtheorem{proposition}[theorem]{Proposition}

\usepackage{cleveref}

\crefname{theorem}{theorem}{theorems}
\Crefname{theorem}{Theorem}{Theorems}
\crefname{lemma}{lemma}{lemmas}
\Crefname{lemma}{Lemma}{Lemmas}
\crefname{definition}{definition}{definitions}
\Crefname{definition}{Definition}{Definitions}
\crefname{remark}{remark}{remarks}
\Crefname{remark}{Remark}{Remarks}
\crefname{proposition}{proposition}{propositions}
\Crefname{proposition}{Proposition}{Propositions}

\begin{document}

\title{Slipstream: Locality-Aware Graph Index Construction for Streaming Approximate Nearest Neighbor Search}

\author{Shubing Yang}
\affiliation{%
  \institution{University of Washington}
}
\email{sueyoung@uw.edu}

\author{Dongfang Zhao}
\affiliation{%
  \institution{University of Washington}
}
\email{dzhao@cs.washington.edu}

\renewcommand{\shortauthors}{Yang et al.}
\newcommand{\don}[1]{\textcolor{red}{[Dongfang: #1]}}

\begin{abstract}




Graph indexes are widely used for high-recall approximate nearest neighbor search (ANNS), but many real-time applications require streaming ANNS. In these real-time applications, continuously arriving embeddings must search the existing graph for candidate neighbors before updating graph edges, which makes repeated index construction a bottleneck for streaming ingestion workloads. 

We propose Slipstream, a new method that significantly reduces the computational cost of frequent insertions in graph indexes for ANNS. The core idea of Slipstream is exploiting the continuity in vector streams: the newly arrived point starts from promising candidates found during the previous insertion rather than searching from the entry point. More technically, Slipstream evaluates distinct subsets of starting candidates followed by an adaptive controller that narrows or widens the range according to the stream's stability. We further show that Slipstream is beyond heuristic: We derive an abstract model to characterize Slipstream's performance and analyze its theoretical bounds.

We have implemented Slipstream in two popular open-source libraries (Faiss, HNSWLib) and compared it with four baseline methods on five streaming vector datasets. Experimental results show that Slipstream achieves up to 30.8$\times$ higher end-to-end throughput than baselines while maintaining at least 0.95 recall@10.

\end{abstract}

\maketitle

\renewcommand\thefootnote{}\footnote{\noindent
The source code is available at \url{https://github.com/suey3141/Slipstream.git}.
}\addtocounter{footnote}{-1}

%

\section{Introduction}

Graph indexes are widely used for high-recall approximate nearest neighbor search (ANNS)~\cite{hnsw2018,NSG2017,diskann2019,anns_survey,qiu2025efficient,li2025scalable}, but many real-time applications now require streaming ANNS. In these applications, continuously arriving embeddings must search the existing graph for candidate neighbors before updating graph edges, which makes repeated index construction a bottleneck for streaming ingestion workloads. This streaming setting appears in workloads such as video analytics~\cite{tsn2016,i3d2017}, Retrieval-Augmented Generation (RAG)~\cite{rag2020,streamingqa,freshllms}, AI agent memory~\cite{memgpt}, and online recommendation~\cite{youtube2016,pinsage2018}. To keep new data searchable, newly arriving vectors must be incorporated soon after they are generated. The cost comes from the insertion procedure itself: each new point first searches the existing graph to find candidate neighbors and then updates graph edges~\cite{hnsw2018,parlayann}. When arrivals are frequent and query volume per insertion is low, this search can dominate online time, as shown in Figure~\ref{fig:breakdown}. This repeated construction cost therefore becomes the primary optimization target in streaming workloads.

\begin{figure}[t]
    \centering
    \begin{subfigure}[t]{0.48\columnwidth}
        \centering
        \includegraphics[width=\linewidth]{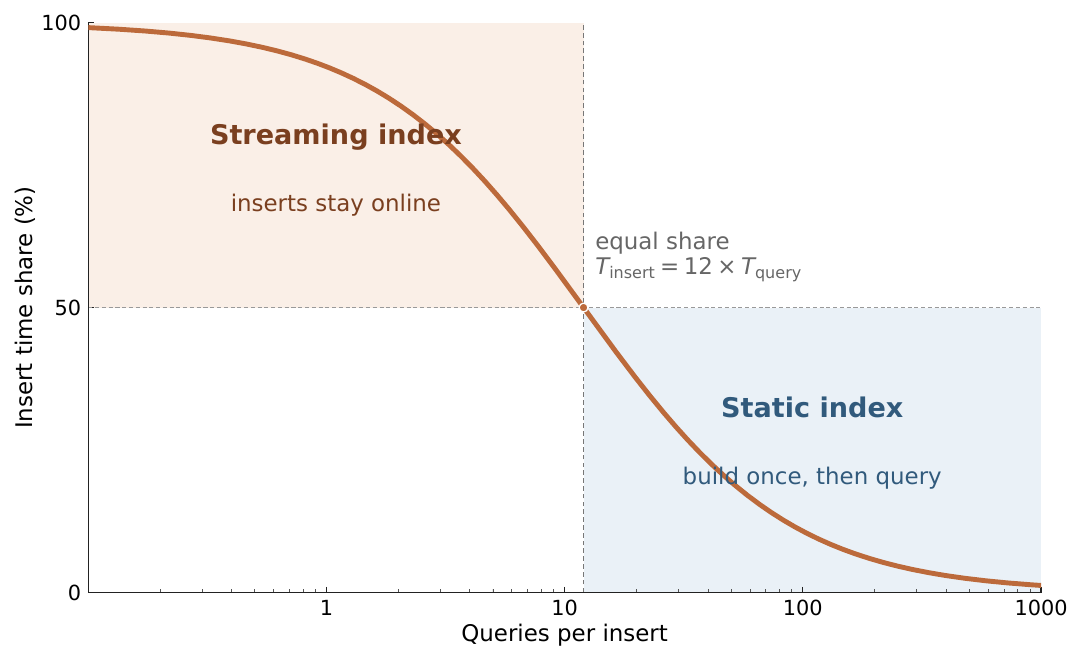}
        \caption{Insert time share.}
        \label{fig:breakdown}
    \end{subfigure}
    \hfill
    \begin{subfigure}[t]{0.48\columnwidth}
        \centering
        \includegraphics[width=\linewidth]{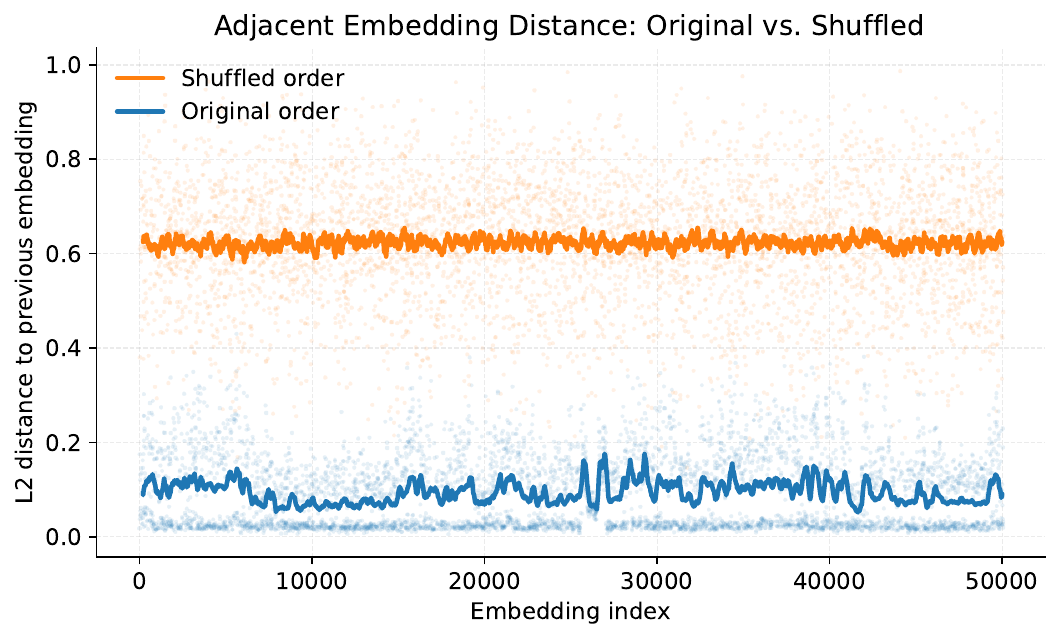}
        \caption{Temporal locality.}
        \label{fig:temporal_locality}
    \end{subfigure}
    \caption{(a) In insertion-heavy streaming workloads, index maintenance dominates time cost. (b) Consecutive points are substantially closer than after shuffling, indicating strong temporal locality.}
    \label{fig:intro_motivation}
\end{figure}

To address this bottleneck, we propose Slipstream, a new method that reduces the computational cost of frequent insertions in graph indexes for ANNS. The core idea of Slipstream is exploiting the continuity in vector streams: an arriving point starts from promising candidates found during the previous insertion rather than searching from the entry point. Standard graph insertion treats each new embedding as an independent search problem, causing repeated graph traversal and distance computation as embeddings arrive continuously. Many vector streams, especially video streams, contain adjacent semantically related items, and nearby frames or clips often share visual content~\cite{tsn2016,i3d2017,opticalflow2004}, so their embeddings tend to be close in vector space. Figure~\ref{fig:temporal_locality} visualizes this effect by comparing the distances between consecutive points in the original stream order with those in a randomly shuffled order. Since consecutive embeddings often target the same part of the graph, their insertion searches can overlap. Slipstream exploits this overlap so the next insertion can spend less computation finding candidate neighbors.

More technically, Slipstream conducts a proximity evaluation of distinct subsets of starting candidates and then uses an adaptive controller that narrows or widens the insertion range according to the stream's stability. For each stream segment, Slipstream caches the previous anchor, the candidates reached by insertion search, the selected neighbors, and their average distance scale. When a new point arrives, Slipstream evaluates whether different candidate subsets remain close enough to the new point to serve as reliable starting candidates. If the point remains within the reliable neighborhood, Slipstream seeds the next insertion with cached candidates and neighbors; otherwise, it discards the cache and follows standard insertion. The adaptive controller then adjusts the insertion range: it narrows the range on stable segments and widens it as stream stability weakens. The proximity check and adaptive controller determine both the seed set and the search budget of each insertion, while avoiding reuse when the previous insertion state no longer matches the current graph region.

We further show that Slipstream is beyond heuristic by deriving an abstract model to characterize its performance and analyzing its theoretical bounds. The abstraction model captures how the adaptive controller changes the insertion range under different stream stability conditions. A log scale approximation around this equilibrium yields the relation between average insertion width and the two controller settings. The recall bound combines a seed displacement argument, which keeps reused seeds near the new point's local neighborhood under small drift, with a fallback mixture argument that sends high drift cases to standard construction.

We implement Slipstream on HNSW using Faiss~\cite{douze2024faiss} and HNSWLib~\cite{hnsw2018}, and evaluate it on five video embedding streams including Kinetics~\cite{kay2017kineticshumanactionvideo}, BDD100K~\cite{bdd100k}, Epic-Kitchens~\cite{epickitchens}, Ego4D~\cite{ego4d}, and VIRAT~\cite{virat}. We compare Slipstream with four baselines: two vanilla HNSW implementations from Faiss and HNSWLib, Ada-ef~\cite{adaptivehnsw} based on HNSWLib, and DARTH~\cite{darth} based on Faiss. Slipstream achieves up to 30.8$\times$ higher end-to-end throughput than baselines while maintaining at least 0.95 recall@10.

This paper makes the following contributions:
\begin{itemize}

    \item We propose Slipstream, a new graph construction method for streaming ANNS that leverages embedding stream continuity to reduce the computational cost of frequent insertions. It carries promising candidates across locally coherent insertions, uses a proximity check to decide when they remain reliable, and adjusts insertion search width with an adaptive controller. (\Cref{sec:slipstream})

    \item We abstract Slipstream's performance into a controller model and theoretically analyze its recall bound. The model predicts segment averaged insertion width from the locality threshold and widening to narrowing step ratio, and the bound proves near target expected recall with a worst case standard construction floor. (\Cref{sec:theory})

    \item We implement Slipstream on HNSW and evaluate it on five video embedding streams, showing up to 30.8$\times$ higher end-to-end throughput than four state-of-the-art baselines while maintaining at least 0.95 recall@10.
    (\Cref{sec:evaluation})

\end{itemize}

\section{Related Work and Preliminaries}
\label{sec:related-work-prelim}

\subsection{Related Work}
\label{sec:related-work}

ANNS has been studied along four broad lines: hashing-based methods that map vectors into buckets~\cite{lsh1998, e2lsh2004, multiprobeLSH, falconn2015, c2lsh2012, qalsh2015, srs2014}; tree-based methods that recursively partition the space~\cite{kdtreebentley1975, flann2009tree, randomizedkdtree, scalableflanntree, covertree}; partition-based and quantization-based methods that compress vectors and prune via coarse cells~\cite{ivf2003, pq2010, opq2014, scann2020, lsq2018, rabitq2024}; and graph-based methods that perform greedy traversal on a proximity graph~\cite{nsw2014, hnsw2018, NSG2017, diskann2019}. Graph-based indexes currently offer the best recall and latency trade-off at scale and are widely deployed in production vector databases~\cite{milvus, manu2022, douze2024faiss, vbase2023, analyticdbv2020}; we focus on this category throughout the paper.

\paragraph{Graph-based approximate nearest
neighbor (ANN) indexes.}
\label{sec:related-graph}
The graph-based approximate nearest
neighbor (ANN) family began with Navigable Small World (NSW)~\cite{nsw2014} and was generalized by Hierarchical Navigable Small World (HNSW)~\cite{hnsw2018}, whose hierarchical small world structure underlies our work. Subsequent research has improved graph quality and connectivity~\cite{NSG2017, 2016efanna, mg2023, fanng2016, hcnng}, extended graph indexes to disk and tiered memory~\cite{diskann2019, lmdiskann, starling2024}, supported attribute-filtered search~\cite{filtereddiskann, acorn2024}, and scaled to industrial deployments~\cite{pinecone, weaviate, qdrant}. Closest to our setting, ParlayANN~\cite{parlayann} parallelizes graph construction by distributing batched insertions across threads, but each point's search is still treated as an independent operation.

\paragraph{Streaming and dynamic ANNS}
\label{sec:related-streaming}
To support continuous updates, prior work extends static ANN indexes along three axes. Hybrid update designs maintain a memory buffer and periodically merge into a base index~\cite{2021freshdiskann, ipdiskann, spfresh, cleann, topology}; organizations based on log structures and multiple tiers spread updates across levels to improve write throughput at the cost of query overhead~\cite{adaptivehnsw, 2025vectraflowcidr, lsmvec2024, milvus}; and studies of graph maintenance analyze how connectivity and recall degrade under sustained updates and propose repair routines~\cite{digra}. These approaches organize, schedule, or repair updates after insertion. Slipstream focuses on the insertion search itself and reuses search state across nearby arrivals when the stream exhibits locality.

\paragraph{Locality and adaptivity in indexing.}
\label{sec:related-locality}
Exploiting access locality is a foundational principle in systems and databases, from optimal replacement~\cite{belady1966} to result and intermediate result caching in large scale search~\cite{saraiva2001, cambazoglu2010caching}. In retrieval, embeddings produced by video and continuous sensing pipelines exhibit strong correlation between frames~\cite{tsn2016, i3d2017, slowfast2019, opticalflow2004}, providing a structural basis for reuse that has not previously been propagated into the ANN graph construction loop. Adaptive indexing has a long history in database systems, including adaptive merging~\cite{graefe2010adaptive}, and knob tuning based on learning~\cite{ottertune2017, cdbtune2019, qtune2019}. Within ANN specifically, Ada-ef~\cite{adaptivehnsw} adapts search width for each query and DARTH~\cite{darth} adjusts effort during insertion toward a recall target. Slipstream differs in two respects: its controller is driven by a locality signal rather than a global recall target, and reuse occurs at the granularity of search state rather than parameter values, composing naturally with the insertion procedure of the underlying graph index.

\begin{figure*}[t]
  \centering
  \includegraphics[width=0.85\linewidth, trim=0 0 0 40, clip]{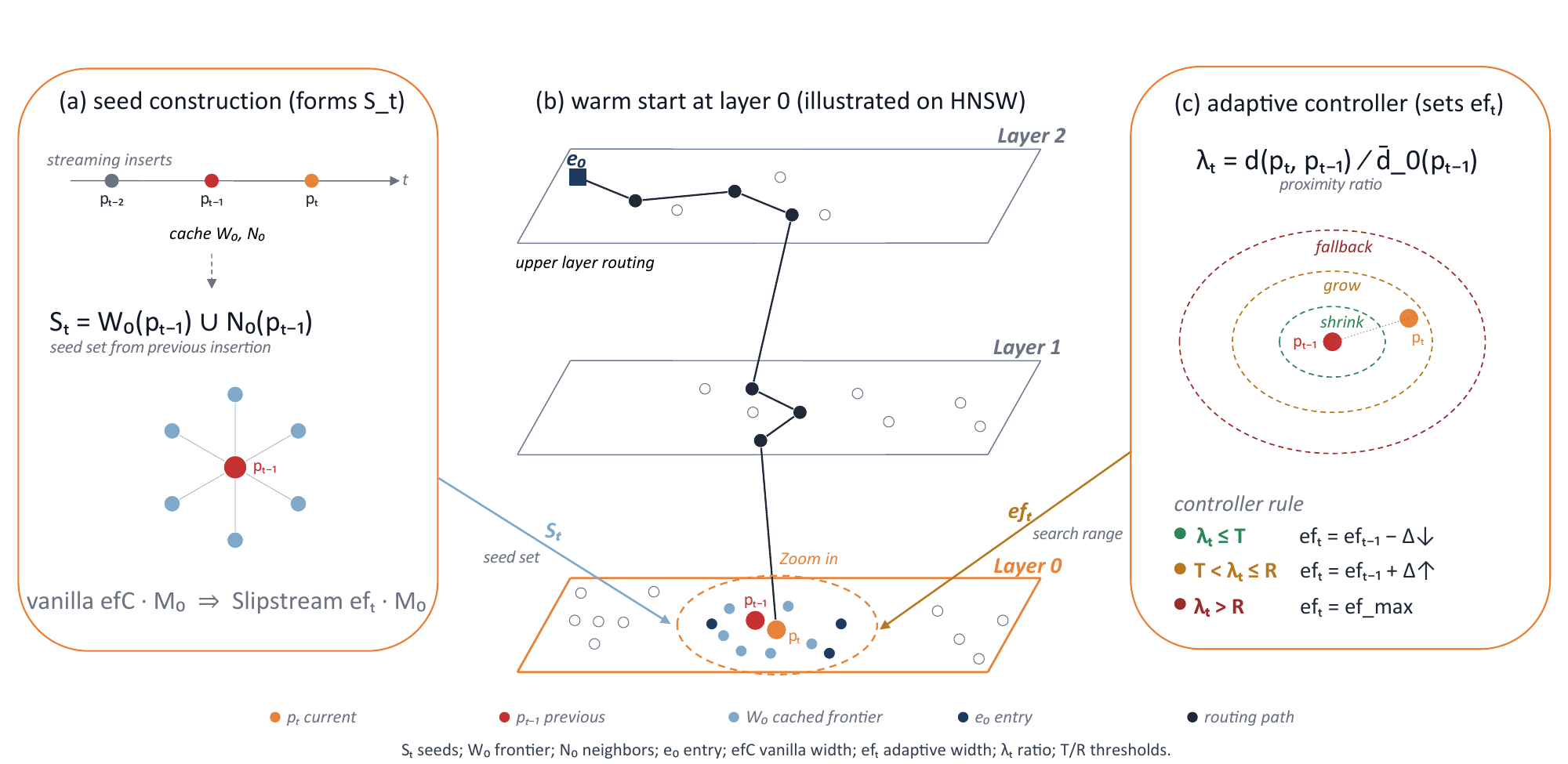}
  \caption{The overview of Slipstream. It illustrates Slipstream’s insertion pipeline: cached layer 0 candidates from the previous insertion initialize the current insertion, while an adaptive controller adjusts the search range according to normalized stream drift.}
  \label{fig:overview}
\end{figure*}

\subsection{Preliminaries}
\label{sec:prelim}

Slipstream targets graph ANN indexes that insert a new vector by
searching the current graph for candidate neighbors. We use
HNSW~\cite{hnsw2018} as the running instance because its insertion
procedure exposes the main cost component: routing in upper layers is
sparse, while layer~$0$ performs dense neighbor construction. This
section reviews the HNSW index (\Cref{sec:prelim_hnsw}) and its
insertion procedure (\Cref{sec:prelim_hnsw_insert}), highlighting the
layer asymmetry that motivates Slipstream.

\subsubsection{Hierarchical Navigable Small World (HNSW)}
\label{sec:prelim_hnsw}

HNSW~\cite{hnsw2018} indexes a dataset
$\mathcal{X}=\{x_1,\dots,x_N\}\subset\mathbb{R}^d$ as a stack of
proximity graphs $G_0, G_1, \dots, G_{L_{\max}}$. Layer $G_0$ contains every point and captures dense local neighborhoods, while higher layers are progressively sparser and provide shortcuts for long range routing. Each point $x$ is assigned a maximum level $\ell(x)=\lfloor -\ln(U)\cdot m_L\rfloor$ with $U\sim\mathrm{Uniform}(0,1)$, and appears in all layers from $G_0$ up to $G_{\ell(x)}$. With $m_L\propto 1/\ln M$, the level distribution decays exponentially, so only a small fraction of points reach the upper layers. Table~\ref{tab:hnsw_params} summarizes the main construction and query parameters used throughout the paper.

\begin{table}[t]
\centering
\caption{Main parameters of HNSW.}
\label{tab:hnsw_params}
\small
\begin{tabularx}{\linewidth}{lX}
\toprule
\textbf{Parameter} & \textbf{Meaning} \\
\midrule
$M$ & Maximum node degree in non-bottom layers ($\ell>0$). \\
$M_0$ & Maximum node degree at layer $0$, typically with $M_0>M$. \\
\texttt{efC} & Search width used during insertion-time graph search. \\
\texttt{efSearch} & Search width used during query-time graph search. \\
$m_L$ & Level multiplier controlling the level distribution. \\
$d(\cdot,\cdot)$ & Distance function used by the index. \\
$\ell(p)$ & Maximum sampled level of inserted point $p$. \\
\bottomrule
\end{tabularx}
\end{table}

Both insertion and query rely on a primitive, $\textsc{SearchLayer}(p,\allowbreak e,\allowbreak \texttt{ef},\allowbreak \ell)$, which greedily explores layer $G_\ell$ starting from the entry point $e$ and returns the $\texttt{ef}$ closest visited nodes to $p$, denoted $W_\ell(p)$. The width parameter $\texttt{ef}$ controls how many candidates are kept active during the walk, and is the main knob trading recall against cost.

\subsubsection{HNSW Insertion Procedure}
\label{sec:prelim_hnsw_insert}

Let $p$ be an arriving point with sampled level $\ell(p)$, and let $L$ be the current maximum level of the index. HNSW inserts $p$ in two stages, which differ in both cost and behavior. In the first stage, covering layers $L$ down to $\ell(p)+1$, HNSW starts from the global entry point $e_{\mathrm{entry}}$ and invokes $\textsc{SearchLayer}$ with $\texttt{ef}=1$ at each layer. This stage is a greedy routing procedure: it uses the returned node as the entry point for the next layer down and creates no edges.

In the second stage, covering layers $\hat{L}=\min(L,\ell(p))$ down to $0$, HNSW invokes $\textsc{SearchLayer}$ at each layer with the wider \texttt{efConstruction} budget to obtain $W_\ell(p)$. It then selects neighbors $N_\ell(p)=\textsc{SelectNeighbors}(p, W_\ell(p), M_\ell)$ subject to a degree bound $M_\ell$ for each layer ($M_0$ at layer~$0$, $M$ otherwise), connects $p$ bidirectionally to $N_\ell(p)$, and prunes neighbors whose degree exceeds the bound as needed. The closest selected neighbor becomes the entry point for the next layer down.

The two stages are asymmetric. Routing in upper layers operates on sparse graphs at $\texttt{ef}=1$ and creates no edges, while neighbor construction runs $\textsc{SearchLayer}$ with the wider \texttt{efC} budget on denser graphs. Since every inserted point reaches layer~$0$ and $M_0>M$, layer~$0$ construction dominates the insertion cost. Slipstream therefore preserves the standard upper layer procedure and modifies only the layer~$0$ insertion search.

\section{Slipstream}
\label{sec:slipstream}

Slipstream reduces streaming graph construction cost by carrying candidates across locally coherent insertions, using a proximity check to decide when reuse is reliable, and adapting the layer~0 insertion width. We describe it on HNSW, where $\textsc{BeamSearch}_0$ denotes the layer~0 insertion search and the upper layer entry follows standard HNSW routing; the same mechanism applies to graph ANN indexes whose insertion searches the current graph for candidate neighbors. \Cref{sec:slipstream-insertion} gives the insertion loop, \Cref{sec:lambda-model} defines the proximity gate, \Cref{sec:controller-rule} gives the controller, and \Cref{sec:complexity} analyzes cost.

\algrenewcommand\algorithmicrequire{\textbf{Input:}}
\algrenewcommand\algorithmicensure{\textbf{Output:}}

\subsection{Slipstream Insertion}
\label{sec:slipstream-insertion}

\begin{algorithm}[t]
\caption{\textsc{Slipstream}}
\label{alg:slipstream}
\begin{algorithmic}[1]
\Require stream batch $P=(p_1,\dots,p_B)$; index $\mathcal{H}$ with $M_0$, $M$, and parameters $\mathit{efC}$ and $\mathit{efR}$; fallback ratio $R$.
\Ensure updated $\mathcal{H}$.
\State $d_i \gets d(p_i,p_{i-1})$ for $i=2,\dots,B$
\State $\theta_{\mathrm{sc}} \gets 2R\cdot\operatorname{median}(d_2,\dots,d_B)$
\State split $P$ into segments $S_1,\dots,S_K$ at positions where $d_i>\theta_{\mathrm{sc}}$
\For{each $S_k$ \textbf{in parallel}}
\State $c \gets \mathbf{null}$
\Comment{per-segment cache $(a,\mathcal{C},\mathcal{N},\bar d_{\mathrm{nn}})$}
\ForAll{$p\in S_k$ in order}
\State $e \gets$ upper-layer greedy entry for $p$
\State $\lambda \gets d(p,c.a)/c.\bar d_{\mathrm{nn}}$ if $c\neq\mathbf{null}$ and $c.\bar d_{\mathrm{nn}}>0$, else $\infty$
\If{$\lambda \le R$}
\Comment{warm-start layer~0 insertion}
\State $\mathcal{S}\gets c.\mathcal{C}\cup c.\mathcal{N}$
\State $\mathcal{C}\gets\textsc{BeamSearch}_0(p,e,\mathcal{S},\ \mathbf{ef}=\mathit{efR})$
\Else
\Comment{fallback to standard layer~0 insertion}
\State $\mathcal{S}\gets\emptyset$
\State $\mathcal{C}\gets\textsc{BeamSearch}_0(p,e,\mathcal{S},\ \mathbf{ef}=\mathit{efC})$
\EndIf
\State $\mathcal{N}\gets\textsc{SelectNeighbors}(p,\mathcal{C},M_0)$
\State link $p\leftrightarrow\mathcal{N}$ and prune affected layer~0 adjacency lists as in standard HNSW
\State $c\gets(p,\ \mathcal{C},\ \mathcal{N},\ \overline{\mathrm{nn}}(p,\mathcal{N}))$
\EndFor
\EndFor
\State \Return $\mathcal{H}$
\end{algorithmic}
\end{algorithm}

\noindent
\Cref{alg:slipstream} shows the insertion loop of Slipstream. The method is based on the observation that consecutive stream points often arrive from nearby regions, so the previous layer~0 search can provide useful candidates for the next insertion. Slipstream first cuts each batch at large displacement jumps, as shown in lines~1--3, so that reuse is only attempted inside locally coherent segments. Each segment keeps a private cache containing the previous anchor point, its candidate set, selected neighbors, and local neighborhood scale. Here, $\overline{\mathrm{nn}}(p,\mathcal{N})=|\mathcal{N}|^{-1}\sum_{v\in\mathcal{N}}d(p,v)$ is the cached local scale used as $\bar d_0(p)$ in the proximity ratio below.

For a new insertion point $p$, Slipstream still uses the standard upper-layer greedy routing to obtain the layer~0 entry $e$. The change is confined to layer~0. Here, $\mathit{efR}$ denotes the reduced layer~0 beam width used on the warm-start path, while $\mathit{efC}$ is the standard construction width used by fallback insertion. If the locality test fails, Slipstream discards the cache and performs a standard HNSW layer~0 insertion with width $\mathit{efC}$. After either path, the cache is refreshed using the candidate set and neighbors of the newly inserted point. The next section defines the proximity ratio used to decide when cache reuse is safe.

\subsection{Proximity Ratio and Fallback}
\label{sec:lambda-model}

The reuse decision should not depend on an absolute distance threshold: the same displacement can be small in a sparse region but large in a dense region. Slipstream therefore normalizes inter-point displacement by the local neighborhood scale of the cached anchor.


\begin{definition}[Proximity ratio]\label[definition]{def:lambda}
For an insertion $p_t$, let $V_{t-1}$ be the vertex set already present before inserting $p_t$. If $p_t$ reuses the neighborhood cache of the preceding point $p_{t-1}\in V_{t-1}$, let $\mathcal{N}_0(p)$ denote the selected layer~0 neighbor set of $p$. Define the local neighborhood scale as
  $\bar d_0(p)
  =
  \frac{1}{|\mathcal{N}_0(p)|}
  \sum_{v\in\mathcal{N}_0(p)} d(p,v)$.
The \emph{proximity ratio} is
\begin{equation}\label{eq:lambda}
  \lambda_t
  =
  \frac{d(p_t,p_{t-1})}{\bar d_0(p_{t-1})}.
\end{equation}
\end{definition}

\noindent
The ratio $\lambda_t$ measures how far the new point moves in units of the cached point's local neighborhood radius. Small values indicate that the cached candidates are likely to overlap the new point's neighborhood; large values indicate that the stream has moved to a different region and cache reuse may miss relevant neighbors.

We set the fallback threshold by modeling the tail of $\lambda_t$. Near-duplicate frames are represented by a point mass at zero, while the remaining positive values follow a dynamic component $F_\Lambda$: $\pi_0\,\delta_0+(1-\pi_0)\,F_\Lambda$. Let $\Lambda$ denote the random variable corresponding to the observed ratios $\lambda_t$, with $\pi_0$ the mass of near-duplicate arrivals and $\delta_0$ a point mass at zero. For this dynamic component, we use an $\mathrm{Erlang}(2,\theta)$ distribution, which gives a compact tail model for two factors that affect reuse quality: inter-frame displacement and local density fluctuation. Its density and survival function are
\begin{align}
  f(\lambda) &= \tfrac{\lambda}{\theta^2}\,e^{-\lambda/\theta},
    \label{eq:erlang-pdf}\\
  S(\lambda) &\coloneqq P(\Lambda > \lambda)
    = \bigl(1+\tfrac{\lambda}{\theta}\bigr)\,e^{-\lambda/\theta}.
    \label{eq:survival}
\end{align}

The fallback ratio $R$ is chosen from this survival function. Given an upper bound $\theta_{\max}$ over the streams of interest and a target tail-probability budget $\varepsilon_1\in(0,1)$, we choose $R$ such that $S_{\theta_{\max}}(R)=\varepsilon_1$. For the full mixture, the tail probability is $(1-\pi_0)S_\theta(R)$, so this choice is conservative and gives $P(\Lambda>R)\le\varepsilon_1$ whenever $\theta\le\theta_{\max}$. Substituting $u=1+\lambda/\theta$ into \Cref{eq:survival} gives $u e^{-u}=\varepsilon_1/e$, which is inverted by the lower branch of the Lambert $W$ function.

\begin{definition}[Fallback ratio]\label[definition]{def:fallback}
The fallback ratio is
\begin{equation}\label{eq:R-def}
  R \;=\; \tau^{*}(\varepsilon_1,\theta_{\max}),\qquad
  \tau^{*}(\varepsilon,\theta)
  \;=\; -\theta\bigl[1 + W_{-1}(-\varepsilon\,e^{-1})\bigr].
\end{equation}
When $\lambda_t>R$, the warm-start cache is abandoned and standard HNSW insertion (with $\mathit{ef}=\mathit{efC}$) is used instead.
\end{definition}

\noindent
Because $S(\lambda)$ is increasing in $\theta$ for $\lambda>0$, any stream with $\theta\le\theta_{\max}$ satisfies $P(\Lambda>R)\le\varepsilon_1$. Thus, $R$ acts as a safety gate: when $\lambda_t>R$, Slipstream disables reuse and falls back to the standard insertion path. For insertions with $\lambda_t\le R$, reuse is allowed, but the layer~0 search width can still be adjusted according to local stability. This is the role of the adaptive controller.

\subsection{Adaptive Controller}
\label{sec:controller-rule}

The fallback rule makes a binary decision: reuse the cache or discard it. Among the insertions that pass this test, however, a fixed reduced width is not always appropriate. A stable segment can use a smaller layer~0 beam, while a segment with growing drift should use a larger beam before fallback becomes necessary. Slipstream therefore replaces the fixed warm-start width $\mathit{efR}$ with a per-segment adaptive width $\mathit{ef}_{\mathrm{cur}}$.

\Cref{alg:adaptive-controller} gives the controller logic used with the insertion loop in \Cref{alg:slipstream}. The controller does not change the structure of \Cref{alg:slipstream}; it only adds one state variable, $\mathit{ef}_{\mathrm{cur}}$, to each segment and substitutes this value for $\mathit{efR}$ on the warm-start path. After a successful warm start, the controller contracts the width when the proximity ratio is small and escalates it when the ratio moves closer to the fallback boundary. After a fallback, it resets the width to $\mathit{ef}_{\max}$ because the previous cache no longer represents the current local region.

\begin{algorithm}[t]
\caption{Adaptive controller for \textsc{Slipstream} (extends \Cref{alg:slipstream})}
\label{alg:adaptive-controller}
\begin{algorithmic}[1]
\Require controller $(\Delta_{\!\uparrow}, \Delta_{\!\downarrow}, T, \mathit{ef}_{\min}, \mathit{ef}_{\max})$.
\Statex \textbf{Per-segment state:}\ \ $\mathit{ef}_{\mathrm{cur}}\in[\mathit{ef}_{\min},\mathit{ef}_{\max}]$, initialized to $\mathit{efR}$ at segment start (cf.\ \Cref{alg:slipstream}, line~5).
\Statex \textbf{Site substitution:}\ \ in \Cref{alg:slipstream}, line~11, replace $\mathbf{ef}\!=\!\mathit{efR}$ with $\mathbf{ef}\!=\!\mathit{ef}_{\mathrm{cur}}$.
\Statex
\Procedure{OnWarmStart}{$\lambda$}
  \Comment{invoked after \Cref{alg:slipstream}, line~11}
  \If{$\lambda > T$}
    \State $\mathit{ef}_{\mathrm{cur}} \gets \min(\mathit{ef}_{\max},\ \mathit{ef}_{\mathrm{cur}} + \Delta_{\!\uparrow})$
      \Comment{escalate}
  \Else
    \State $\mathit{ef}_{\mathrm{cur}} \gets \max(\mathit{ef}_{\min},\ \mathit{ef}_{\mathrm{cur}} - \Delta_{\!\downarrow})$
      \Comment{contract}
  \EndIf
\EndProcedure
\Procedure{OnFallback}{}
  \Comment{invoked after \Cref{alg:slipstream}, line~14}
  \State $\mathit{ef}_{\mathrm{cur}} \gets \mathit{ef}_{\max}$
    \Comment{drift reset}
\EndProcedure
\end{algorithmic}
\end{algorithm}

In the batched-parallel implementation, each worker processes a contiguous segment and maintains its own controller state. This keeps adaptation local: stable segments quickly contract toward $\mathit{ef}_{\min}$, while segments with larger motion gradually move toward $\mathit{ef}_{\max}$. Let $\mathit{ef}_{\min}$ and $\mathit{ef}_{\max}$ be the lower and upper clamps, and let $\Delta_{\!\uparrow}$ and $\Delta_{\!\downarrow}$ be the escalation and contraction steps.

\begin{definition}[Escalation threshold]\label[definition]{def:escalation}
Let $\varepsilon_2\in(\varepsilon_1,1)$ be a second tail-probability budget. The escalation threshold is
\begin{equation}\label{eq:T-def}
  T \;=\; \tau^{*}(\varepsilon_2,\theta_{\max}),
\end{equation}
Since $\varepsilon_2>\varepsilon_1$, the survival threshold satisfies $T<R$, and $P(\Lambda>T)\le\varepsilon_2$ for any stream with $\theta\le\theta_{\max}$.
\end{definition}

The threshold $T$ separates stable reuse from cautious reuse. Values $\lambda_t\le T$ indicate that the cached neighborhood remains close to the new point, so the controller contracts the beam. Values $T<\lambda_t\le R$ still allow reuse, but indicate larger local drift, so the controller increases the beam before the fallback boundary is reached. The search width evolves as
\begin{equation}\label{eq:ef-update}
  \mathit{ef}_t =
  \begin{cases}
    \min\bigl(\mathit{ef}_{\max},\;
      \mathit{ef}_{t-1} + \Delta_{\!\uparrow}\bigr)
      & \text{if } T < \lambda_t \leq R,\\[2pt]
    \max\bigl(\mathit{ef}_{\min},\;
      \mathit{ef}_{t-1} - \Delta_{\!\downarrow}\bigr)
      & \text{if } \lambda_t \leq T,\\[2pt]
    \mathit{ef}_{\max}
      & \text{if } \lambda_t > R.
  \end{cases}
\end{equation}
Thus, $T$ controls how aggressively Slipstream reduces work within safe reuse cases, while $R$ remains the hard boundary that disables reuse.

\subsection{Complexity}
\label{sec:complexity}

We analyze construction cost by counting distance computations, which dominate graph ANN insertion. Slipstream changes only the layer~$0$ insertion search; the upper layer routing procedure remains identical to standard HNSW. We therefore separate the cost into three parts: batch segmentation, layer~$0$ construction, and upper-layer routing.

The segmentation step in \Cref{alg:slipstream} computes adjacent distances within a batch and splits the batch at displacement spikes. For a batch of $B$ vectors, this requires $O(B)$ distance computations and a median computation over the adjacent distances. Amortized over the batch, this adds constant work per insertion and does not change the asymptotic insertion cost.

For standard HNSW, a layer~$0$ insertion with construction width $\mathit{efC}$ costs
\begin{equation}
  C^{(0)}_{\mathrm{std}} = O(\mathit{efC}\cdot M_0)
\end{equation}
distance computations. The beam search keeps up to $\mathit{efC}$ active candidates and performs up to $O(\mathit{efC})$ expansions. Each expansion examines at most $M_0$ outgoing neighbors, giving the above bound. This is the dominant part of insertion because every point reaches layer~$0$, where the graph is densest.

A warm-start insertion replaces part of this expansion work with seed evaluation. Let $\mathcal{S}_t$ be the cached seed set used for insertion $p_t$. In \Cref{alg:slipstream}, $\mathcal{S}_t=c.\mathcal{C}\cup c.\mathcal{N}$, so $|\mathcal{S}_t|\le \max\{\mathit{efC},\mathit{ef}_{\max}\}+M_0$, which reduces to $\mathit{efC}+M_0$ under our setting $\mathit{ef}_{\max}\le\mathit{efC}$. Slipstream first evaluates the distance from $p_t$ to each seed and then runs layer~$0$ beam search with the current width $\mathit{ef}_t$. Thus the warm-start layer~$0$ cost is
\begin{equation}
  C^{(0)}_{\mathrm{ws}}(t)
  = O(|\mathcal{S}_t|+\mathit{ef}_t\cdot M_0).
\end{equation}
The seed term is a bounded additive overhead: each seed is evaluated once before the beam search begins. The expansion term is reduced from $O(\mathit{efC}\cdot M_0)$ to $O(\mathit{ef}_t\cdot M_0)$. Therefore, when $\mathit{ef}_t\ll \mathit{efC}$, the layer~$0$ search work decreases mainly through the smaller beam width.

The adaptive controller adds only constant work per insertion. It compares $\lambda_t$ with $T$ and $R$, updates $\mathit{ef}_{\mathrm{cur}}$ by a clamp operation, and resets the width after fallback. Hence the controller does not add a new asymptotic term. Its effect appears through the beam width $\mathit{ef}_t$ used on the warm-start path.

Slipstream alternates between warm-start and fallback insertions. Let $P_{\mathrm{ws}}$ be the warm-start hit rate in \Cref{eq:p-ws}, and let $E[\mathit{ef}]$ be the expected warm-start width induced by the controller in \Cref{eq:ef-closed}. On warm-start insertions, the expected layer~$0$ cost is $O(E[|\mathcal{S}_t|]+E[\mathit{ef}]\cdot M_0)$; on fallback insertions, Slipstream uses standard layer~$0$ insertion with cost $O(\mathit{efC}\cdot M_0)$. The expected layer~$0$ cost per insertion is therefore
\begin{equation}
  \bar{C}^{(0)}
  =
  O\!\left(
  P_{\mathrm{ws}}\bigl(E[|\mathcal{S}_t|]
  + E[\mathit{ef}]\cdot M_0\bigr)
  +
  (1-P_{\mathrm{ws}})\mathit{efC}\cdot M_0
  \right).
\end{equation}
Since $E[|\mathcal{S}_t|]\le \max\{\mathit{efC},\mathit{ef}_{\max}\}+M_0$, the seed evaluation cost remains bounded by the size of the previous search result and neighbor set. The expected saving is governed by two quantities: how often warm start is used, captured by $P_{\mathrm{ws}}$, and how small the controller keeps the warm-start width, captured by $E[\mathit{ef}]$.

The upper-layer cost is unchanged between standard HNSW and Slipstream. We keep it as a shared additive term $C_{\mathrm{upper}}(N,M)$, so the total expected insertion cost of Slipstream is $O(C_{\mathrm{upper}}(N,M)+\bar{C}^{(0)})$, while standard HNSW costs $O(C_{\mathrm{upper}}(N,M)+\mathit{efC}\cdot M_0)$. When layer~0 construction dominates the shared upper-layer term, the throughput improvement comes from replacing the standard layer~0 search cost with $\bar{C}^{(0)}$.

\section{Analysis and Bounds}
\label{sec:theory}

We analyze Slipstream along three axes: seed distance, controller equilibrium, and graph quality. \Cref{sec:theory-warmstart} bounds the distance from cached seeds to the new insertion; \Cref{sec:controller-equilibrium} characterizes the controller equilibrium and recall calibration relation; \Cref{sec:quality} relates insertion candidate quality to graph level query recall. Throughout this section, we assume that $d(\cdot,\cdot)$ is a metric and that $\bar d_0(p)>0$; if the local scale is zero, reuse is disabled for that insertion.

\subsection{Seed Quality}
\label{sec:theory-warmstart}

Standard HNSW initializes each layer~0 beam search from a single entry point obtained by greedy descent. Slipstream instead initializes the search with a seed set
$\mathcal{S}_t = W_0(p_{t-1})\cup\mathcal{N}_0(p_{t-1})$
from the previously inserted point, where $W_0(p_{t-1})$ is the beam search result set from $p_{t-1}$'s insertion and $\mathcal{N}_0(p_{t-1})$ is its selected neighbor set. In the analysis, $W_0(p_t)$ denotes the layer~0 candidate set returned by $\textsc{BeamSearch}_0$ for insertion $p_t$, corresponding to $\mathcal{C}$ in \Cref{alg:slipstream}.

The cached seeds remain close to the new insertion when the stream drift is small. For any cached vertex $v\in\mathcal{S}_t$, the triangle inequality gives
\[
d(p_t,v)\le d(p_t,p_{t-1})+d(p_{t-1},v).
\]
Averaging over the cached selected neighbors $\mathcal{N}_0(p_{t-1})$ gives
\begin{equation}\label{eq:seed-avg}
\frac{1}{|\mathcal{N}_0(p_{t-1})|}
\sum_{v\in\mathcal{N}_0(p_{t-1})}
d(p_t,v)
\le
(1+\lambda_t)\bar d_0(p_{t-1}).
\end{equation}
When $\lambda_t$ is small, the selected neighbors cached from $p_{t-1}$ remain close to $p_t$ at the local neighborhood scale. This distance bound gives the geometric basis for using a smaller beam width in the reuse regime.

\subsection{Controller Calibration}
\label{sec:controller-equilibrium}

Let $\bar e$ denote the segment averaged search width, measured as the average insertion search width over a stream segment. The controller analysis uses three relations. Model~1, $\rho(\bar e)$, relates recall to the segment averaged search width. Model~2, $\tau(\bar e)$, relates build time to the same quantity. Model~3, $\bar e(\beta^*,T)$, characterizes the equilibrium operating point selected by the controller, where $\beta^*=\Delta_{\downarrow}/(\Delta_{\uparrow}+\Delta_{\downarrow})$ is the drift balance and $T$ is the escalation threshold. Models~1 and~2 are fitted empirically in \Cref{sec:adaptive-calibration-section}. Here, we derive Model~3 and combine it with the recall model to obtain the resulting calibration relation.

The core derivation is an interior balance law. We reserve
$\lambda_t$ for the random proximity ratio defined in
\Cref{def:lambda} and use $\eta$ to denote a generic threshold
parameter; the controller instantiates $\eta = T$ via
\Cref{def:escalation}. Since fallback is already controlled by
\Cref{eq:R-def}, we derive the equilibrium by treating the
$O(\varepsilon_1)$ fallback mass $\{\Lambda > R\}$ as negligible
and focusing on the warm start regime. Let $e_{\min}$ denote the
floor of the segment averaged search width induced by the controller
clamp.

\begin{definition}[Quality pressure rate]
\label[definition]{def:quality-pressure}
For a threshold $\eta$ and a long run average search width
$\bar e$, define the \emph{quality pressure rate}
\begin{equation}\label{eq:pressure}
  \beta(\bar e,\eta)
  \;=\; P\bigl(\Lambda_t > \eta \,\bigl|\,
        \text{stream induces equilibrium } \bar e\bigr)\,.
\end{equation}
The conditioning is across the family of streams under
consideration: streams that drive the controller to the same
equilibrium $\bar e$ are identified, and the marginal tail
probability of $\Lambda_t$ over that subfamily is reported.
\end{definition}

\noindent
For fixed $\eta$, we assume $\beta(\cdot,\eta)$ is positive,
$C^1$ smooth, strictly decreasing on $(e_{\min},\infty)$,
satisfies $\beta(e_{\min},\eta) > \beta^*$, and obeys
$\lim_{\bar e\to\infty} \beta(\bar e,\eta) = 0$. For the local
calibration around an interior equilibrium, we further assume
$\partial_{\bar e}\beta(\bar e^{*},\eta)<0$ and
$\partial_{\eta}\beta(\bar e^{*},\eta)<0$.

\begin{proposition}[Equilibrium balance]
\label[proposition]{prop:equilibrium-balance}
Let the controller use threshold $\eta = T$, so that the per-step
transitions are: increment by $\Delta_{\!\uparrow}$ if
$T<\lambda_t\leq R$, decrement by $\Delta_{\!\downarrow}$ if
$\lambda_t\leq T$, and reset to $\mathit{ef}_{\max}$ if
$\lambda_t>R$. Under the rare fallback approximation
$P(\Lambda > R \mid \bar e) \approx 0$, the third regime contributes
negligibly and $P(T<\Lambda\leq R\mid\bar e)
\approx P(\Lambda>T\mid\bar e)$. The conditional mean drift at
$(\bar e,T)$ is
\[
  m(\bar e,T)
  \;=\; \Delta_{\!\uparrow}\,\beta(\bar e,T)
      - \Delta_{\!\downarrow}\,(1-\beta(\bar e,T)).
\]
Any interior equilibrium therefore satisfies
\begin{equation}\label{eq:quality-inv}
  \beta(\bar e^{\,*},T)
  \;=\; \frac{\Delta_{\!\downarrow}}%
             {\Delta_{\!\uparrow} + \Delta_{\!\downarrow}}
  \;=\; \beta^*\,.
\end{equation}
Under the smoothness and monotonicity assumptions above, this equilibrium
is unique and depends locally smoothly on $(\beta^*,T)$.
\end{proposition}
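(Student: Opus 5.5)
The proof has three parts: compute the one-step conditional drift of $\mathit{ef}_t$ from \Cref{eq:ef-update}, set it to zero to obtain \Cref{eq:quality-inv}, and then use the monotonicity and smoothness assumptions on $\beta$ to get existence, uniqueness and local smooth dependence.

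\textbf{Step 1: the mean drift.} Conditioning on the subfamily of streams that induce equilibrium $\bar e$, the increment $\mathit{ef}_t-\mathit{ef}_{t-1}$ takes three values.
\begin{itemize}
\item It is $+\Delta_{\!\uparrow}$ on $\{T<\Lambda_t\le R\}$.
\item It is $-\Delta_{\!\downarrow}$ on $\{\Lambda_t\le T\}$.
\item It is a reset on $\{\Lambda_t>R\}$.
\end{itemize}
Away from the clamps, the expected increment is
\[
\Delta_{\!\uparrow}P(T<\Lambda\le R\mid\bar e)-\Delta_{\!\downarrow}P(\Lambda\le T\mid\bar e)+O\bigl(P(\Lambda>R\mid\bar e)\bigr).
\]
I would apply the rare-fallback approximation: by \Cref{def:fallback}, $P(\Lambda>R)\le\varepsilon_1$. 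Dropping that term and replacing $P(T<\Lambda\le R\mid\bar e)$ by $\beta(\bar e,T)$ from \Cref{def:quality-pressure} gives the stated $m(\bar e,T)$, up to an $O(\varepsilon_1)$ error that I would state explicitly. The clamp at $\mathit{ef}_{\min}$ only matters on the boundary, so it is excluded for an interior equilibrium $\bar e^*>e_{\min}$.

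\textbf{Step 2: the balance law.} A long-run average width is stationary exactly when the mean drift vanishes. Setting $m(\bar e^*,T)=0$ and rearranging gives
\[
(\Delta_{\!\uparrow}+\Delta_{\!\downarrow})\,\beta=\Delta_{\!\downarrow},
\]
which is \Cref{eq:quality-inv}.

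\textbf{Step 3: existence and uniqueness.} The function $\beta(\cdot,T)$ is continuous and strictly decreasing on $(e_{\min},\infty)$. It satisfies $\beta(e_{\min},T)>\beta^*$ and tends to $0<\beta^*$ as $\bar e\to\infty$, since $\beta^*>0$ whenever $\Delta_{\!\downarrow}>0$. The intermediate value theorem gives a root, and strict monotonicity makes it unique.

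\textbf{Step 4: local smooth dependence.} I would apply the implicit function theorem to $F(\bar e,\beta^*,T)=\beta(\bar e,T)-\beta^*$. The map $F$ is $C^1$, and $\partial_{\bar e}F=\partial_{\bar e}\beta(\bar e^*,T)<0$ is nonzero. So $\bar e^*$ is locally a $C^1$ function of $(\beta^*,T)$, with
\[
\partial_{\beta^*}\bar e^*=\frac{1}{\partial_{\bar e}\beta}<0,
\qquad
\partial_T\bar e^*=-\frac{\partial_T\beta}{\partial_{\bar e}\beta}<0.
\]
These signs come from the assumed $\partial_\eta\beta<0$, and they set up the log-scale calibration used later.

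\textbf{Main obstacle.} The hard part is Step 1–2: making "the long-run average width satisfies zero mean drift" rigorous. The controller is a clamped random walk with resets, and $\beta$ is defined self-referentially through the equilibrium it induces. I would not attempt a full ergodic argument. Instead I would take the stationarity condition as the defining property of $\bar e^*$: for a stationary chain, $E[\mathit{ef}_t]=E[\mathit{ef}_{t-1}]$ forces the expected increment to be zero. I would also record explicitly that the clamp and reset terms are being neglected in the interior, rare-fallback regime.
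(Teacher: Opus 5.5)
Your proposal is correct and follows the paper's proof essentially step for step. Both write the three-branch expected increment (the paper keeps the reset term explicitly as $(\mathit{ef}_{\max}-\bar e)P(\Lambda>R\mid\bar e)$, which you absorb into $O(P(\Lambda>R\mid\bar e))$), drop it under the rare-fallback approximation, set the drift to zero, and then use strict monotonicity for uniqueness and the implicit function theorem for local smooth dependence.

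You also go slightly beyond the paper in two places. Your existence argument via the intermediate value theorem is not in the paper, and it needs $\beta(\cdot,T)$ to be right-continuous at $e_{\min}$. Your explicit negative signs for $\partial_{\beta^*}\bar e^{*}$ and $\partial_T\bar e^{*}$ are exactly what the paper later relies on when it asserts $\alpha_1,\alpha_2>0$.
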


\begin{proof}
Conditioned on the segment averaged width $\bar e$, the controller has three possible updates. Before applying the rare fallback approximation, the expected per-step change is
\[
\begin{aligned}
  m_R(\bar e,T)
  &=
  \Delta_{\!\uparrow}P(T<\Lambda\le R\mid \bar e)
  -\Delta_{\!\downarrow}P(\Lambda\le T\mid \bar e) \\
  &\quad
  +(\mathit{ef}_{\max}-\bar e)P(\Lambda>R\mid \bar e).
\end{aligned}
\]
The rare fallback approximation drops the last term and replaces
$P(T<\Lambda\le R\mid \bar e)$ by
$P(\Lambda>T\mid \bar e)=\beta(\bar e,T)$. Since
$P(\Lambda\le T\mid \bar e)=1-\beta(\bar e,T)$, this gives
$m(\bar e,T)=\Delta_{\!\uparrow}\beta(\bar e,T)-
  \Delta_{\!\downarrow}(1-\beta(\bar e,T))$. An interior equilibrium has zero drift, so
\[
  0
  =
  \Delta_{\!\uparrow}\beta(\bar e^{*},T)
  -
  \Delta_{\!\downarrow}(1-\beta(\bar e^{*},T)).
\]
Rearranging gives
$(\Delta_{\!\uparrow}+\Delta_{\!\downarrow})\beta(\bar e^{*},T)
=\Delta_{\!\downarrow}$, which proves \Cref{eq:quality-inv}.
Strict monotonicity of $\beta(\cdot,T)$ gives uniqueness. Since
$\partial_{\bar e}\beta(\bar e^{*},T)<0$ at the interior equilibrium,
the implicit function theorem gives local smooth dependence on
$(\beta^*,T)$.
\end{proof}

The balance in \Cref{eq:quality-inv} defines the equilibrium
$\bar e^{\,*}(\beta^*,T)$ implicitly. Rather than postulate a global
functional form for $\beta(\bar e,\eta)$, we obtain a tractable
closed form by linearizing the equilibrium surface in log coordinates
around an operating point.

Write the excess effort above the controller floor as
$\tilde e := \bar e - e_{\min}$, fix a reference operating point
$(\beta^*_0, \eta_0)$ with corresponding equilibrium
$\tilde e^{\,*}_0 := \bar e^{\,*}(\beta^*_0,\eta_0)-e_{\min}>0$,
and define the local equilibrium elasticities
\begin{equation}\label{eq:elasticities}
  \alpha_1 :=
    -\,\frac{\partial \ln \tilde e^{\,*}}{\partial \ln \beta^*}
    \bigg|_{(\beta^*_0,\eta_0)},
  \qquad
  \alpha_2 :=
    -\,\frac{\partial \ln \tilde e^{\,*}}{\partial \ln \eta}
    \bigg|_{(\beta^*_0,\eta_0)}.
\end{equation}
Both $\alpha_1$ and $\alpha_2$ are well defined and positive by the
smoothness and monotonicity assumptions above. Concretely, $\alpha_1$
is the negative log slope of the equilibrium effort with respect to
the drift balance, and $\alpha_2$ is the negative log slope with
respect to the threshold.

A first order Taylor expansion of $\ln\tilde e^{\,*}$ in
$(\ln \beta^*,\ln\eta)$ gives
\begin{equation}\label{eq:loglin-equilibrium}
  \ln \tilde e^{\,*}(\beta^*,\eta)
  \;\approx\;
  \ln \tilde e^{\,*}_0
  \;-\; \alpha_1\,\ln(\beta^*/\beta^*_0)
  \;-\; \alpha_2\,\ln(\eta/\eta_0),
\end{equation}
valid in a neighborhood of $(\beta^*_0,\eta_0)$. Equivalently,
substituting $\eta = T$ and exponentiating,
\begin{equation}\label{eq:model3-derived}
  \bar e^{\,*}(\beta^*,T)
  \;\approx\;
  e_{\min} + A\,(\beta^*)^{-\alpha_1}\,T^{-\alpha_2},
\end{equation}
with
\begin{equation}\label{eq:model3-mapping}
  A \;=\; \tilde e^{\,*}_0 \,(\beta^*_0)^{\alpha_1}\,
          \eta_0^{\alpha_2}.
\end{equation}
The exponent ratio
\begin{equation}\label{eq:exponent-ratio}
  \alpha_2/\alpha_1
  \;=\;
  \frac{\partial \ln \tilde e^{\,*}/\partial \ln \eta}%
       {\partial \ln \tilde e^{\,*}/\partial \ln \beta^*}
  \bigg|_{(\beta^*_0,\eta_0)}
\end{equation}
is the marginal rate of substitution between threshold and drift balance along the equilibrium surface.

\paragraph{Iso-recall calibration.}
\label{par:iso-recall}
Let $\rho_t$ denote a target recall and let $\hat e_{M_1}(\cdot)$ be any continuous, strictly increasing recall to effort inverse derived from Model~1, so that $\rho\bigl(\hat e_{M_1}(\rho_t)\bigr)=\rho_t$. Setting $\bar e^{\,*}(\beta^*,T) = \hat e_{M_1}(\rho_t)$ in \Cref{eq:model3-derived}, the controller settings that target recall $\rho_t$ satisfy, to first order in log coordinates around the reference point,
\begin{equation}\label{eq:iso-recall-law}
  \alpha_1 \ln \beta^{*} \;+\; \alpha_2 \ln T
  \;=\; \ln A
        \;-\; \ln\!\bigl(\hat e_{M_1}(\rho_t) - e_{\min}\bigr).
\end{equation}
This relation is a straight line in $(\ln \beta^*,\,\ln T)$ coordinates with slope $-\alpha_1/\alpha_2$. The two knobs are realized through $\Delta_{\!\uparrow},\Delta_{\!\downarrow}$, which set $\beta^*$, and $\varepsilon_2$, which sets $T$.

Two auxiliary expressions are used later in the complexity and
calibration sections. First, the warm start hit rate is
\begin{equation}\label{eq:p-ws}
  P_{\mathrm{ws}}(R,\,L)
  \;=\; \frac{L-1}{L}\;
  \bigl[\pi_0 + (1-\pi_0)(1-S(R))\bigr]\,,
\end{equation}
where $(L-1)/L$ is cache availability within a segment and the
bracketed term is the probability that an available cache is usable.

Second, assume rare escalations ($\beta\ll 1$); the segment averaged
search width follows a clipped deterministic decay. For unit decrement
($\Delta_{\!\downarrow}=1$), let
$H=\mathit{efR}-\mathit{ef}_{\min}$ and $h=\min(L,H)$. Then
\begin{equation}\label{eq:ef-closed}
  E[\mathit{ef}]
  \;=\; \frac{1}{L}\left[
    \frac{h(2\,\mathit{efR} - h + 1)}{2}
    + (L - h)\,\mathit{ef}_{\min}
  \right].
\end{equation}
For general $\Delta_{\!\downarrow}$, the deterministic baseline is
\[
  E[\mathit{ef}]
  =
  \frac{1}{L}
  \sum_{i=0}^{L-1}
  \max\{\mathit{ef}_{\min},\,\mathit{efR}-i\Delta_{\!\downarrow}\}.
\]
This is the deterministic cost baseline used later for throughput accounting; it should not be confused with the local response surface in \Cref{eq:model3-derived}.

\subsection{Graph Quality Bound}
\label{sec:quality}

The warm start scheme modifies only the search phase of insertion,
not the selection phase. After the beam search returns a candidate
set $W_0(p_t)$, Slipstream applies the same
\textsc{SelectNeighbors} heuristic and bidirectional linking procedure
as standard HNSW. We relate graph quality to insertion quality in
three steps: candidate recall for each insertion, the effect of reuse
under locality, and graph level query recall.

\subsubsection{Insertion Time Candidate Quality}
\label{sec:quality-insertion}

The insertion quality signal used throughout this section is
\emph{insertion recall}. For an insertion $p_t$, let
$\mathcal{N}_0^{*}(p_t)$ be the true $M_0$ nearest neighbor set
of $p_t$ in $V_{t-1}$ and $W_0(p_t)$ the candidate set returned
by the beam search. The insertion recall is
\[
r(p_t)=
\frac{|W_0(p_t)\cap\mathcal{N}_0^{*}(p_t)|}
     {|\mathcal{N}_0^{*}(p_t)|}.
\]

Under Slipstream, warm start insertions ($\lambda_t\le R$) run beam
search at the current adaptive width
$\mathit{ef}_{\mathrm{cur}}\in[\mathit{ef}_{\min},\mathit{ef}_{\max}]$
from the initial frontier $\{e_0\}\cup\mathcal{S}_t$, where $e_0$
is the greedy descent entry used by the standard search and
$\mathcal{S}_t$ is the seed set inherited from $p_{t-1}$. We use the following monotone search abstraction: increasing the search width or adding initial seeds to the standard entry does not reduce insertion recall. Fallback insertions
reduce to standard HNSW at width $\mathit{efC}\ge
\mathit{ef}_{\max}$. Under this abstraction, for every insertion,
\begin{equation}\label{eq:quality-pointwise}
  r(p_t) \;\ge\;
  \begin{cases}
    r_{\mathrm{std}}(p_t;\mathit{ef}_{\mathrm{cur}}) & \text{if } \lambda_t \le R,\\
    r_{\mathrm{std}}(p_t;\mathit{efC}) & \text{if } \lambda_t > R.
  \end{cases}
\end{equation}
Taking the floor $\mathit{ef}_{\mathrm{cur}}=\mathit{ef}_{\min}$
gives the worst case warm start reference
$r_{\mathrm{std}}(p_t;\mathit{ef}_{\min})$.

\subsubsection{Reuse Under Locality}
\label{sec:quality-reuse}

The pointwise bound gives a conservative expected bound. The fallback
ratio in \Cref{def:fallback} caps the fallback probability at
$\varepsilon_1$ for streams with $\theta\le\theta_{\max}$, but this is
an upper bound on the high-width fallback branch and therefore cannot
be used as a lower bound on the fallback contribution. Taking
$\mathit{ef}_{\mathrm{cur}}=\mathit{ef}_{\min}$ as the worst case
adaptive width on the warm start branch gives
\begin{equation}\label{eq:quality-mixture}
  \mathbb{E}\bigl[r(p_t)\bigr]
  \;\ge\;
  \mathbb{E}\!\left[
    r_{\mathrm{std}}(p_t;\mathit{ef}_{\min})
  \right].
\end{equation}
Let $r_{\min,t}=r_{\mathrm{std}}(p_t;\mathit{ef}_{\min})$ and
$\Delta r_t=r_{\mathrm{std}}(p_t;\mathit{efC})-r_{\min,t}$. Then
\[
  \mathbb{E}[r(p_t)]
  \ge
  \mathbb{E}[r_{\min,t}]
  +
  \mathbb{E}\!\left[
    \mathbf{1}\{\lambda_t>R\}\Delta r_t
  \right].
\]

The bound above ignores the seed overlap already present before search. By \Cref{eq:seed-avg}, the cached selected neighbors have mean distance at most $(1+\lambda_t)\bar d_0(p_{t-1})$ from $p_t$. Thus, when $\lambda_t$ is small, the seed set is geometrically close to the target neighborhood, which explains why a small warm-start width can approach the candidate quality of a larger standard construction width. We use this observation as intuition; the formal lower bound relies only on the conservative candidate recall bound above.

\subsubsection{Impact on Query Recall}
\label{sec:quality-graph}

We now relate the insertion level bound to graph level query recall.
Let $G_N^{\mathrm{ws}}$ and $G_N^{\mathrm{std}}(\mathit{ef})$ denote
the layer~0 graphs constructed by Slipstream and standard HNSW
(insertion width $\mathit{ef}$), respectively, after $N$ streaming
insertions, using the same \textsc{SelectNeighbors} heuristic and
degree bound $M_0$. This step requires a graph level monotonicity
assumption: for two coupled builds with the same insertion order and
the same neighbor selection rule, if every insertion in one build has
candidate quality at least that of the corresponding insertion in the
other build, then its expected query recall is not lower.

Under this graph level monotonicity assumption, \Cref{eq:quality-pointwise}
implies the conditional lower bound
\begin{equation}\label{eq:recall-bound}
  \mathbb{E}_{q}\!\left[\mathrm{Recall}(q;\, G_N^{\mathrm{ws}})\right]
\;\geq\;
\mathbb{E}_{q}\!\left[\mathrm{Recall}(q;\, G_N^{\mathrm{std}}(\mathit{ef}_{\min}))\right].
\end{equation}
In expectation over the stream, the conservative insertion quality
reference is the floor width bound in \Cref{eq:quality-mixture}.

The bound has two readings. In the adversarial case, the controller
can remain clamped at $\mathit{ef}_{\min}$, giving the floor width
guarantee in \Cref{eq:recall-bound}. In the model guided case, for
streams consistent with the Erlang tail model at
$\theta\le\theta_{\max}$, the controller settles at the equilibrium
$\bar e^{*}(\beta^*,T)$ of \Cref{prop:equilibrium-balance}, and the
iso recall relation in the iso-recall calibration above ties this equilibrium
width to a target recall.

Suppose Slipstream operates on such a stream, and let
$\hat e_{M_1}(\rho_t)$ be the standard HNSW insertion width that
realizes target query recall $\rho_t$ via Model~1. If the controller
knobs $(\Delta_{\!\uparrow},\Delta_{\!\downarrow},\varepsilon_2)$
are placed on the iso recall line in \Cref{eq:iso-recall-law}, then
$\bar e^{*}(\beta^*,T)=\hat e_{M_1}(\rho_t)$. The fallback branch
contributes at width $\mathit{efC}\ge\bar e^{*}$, while the
discarded fallback mass contributes the same $O(\varepsilon_1)$ slack
as in the equilibrium derivation. This gives
\begin{equation}\label{eq:adaptive-recall}
  \mathbb{E}\bigl[\mathrm{Recall}(q;\,G_N^{\mathrm{ws}})\bigr]
  \;\gtrsim\; \rho_t \;-\; O(\varepsilon_1).
\end{equation}

Finally, fallback is applied exactly when the cache is least
trustworthy, namely when $\lambda_t>R$. These insertions are also the
cases where a small warm start width would be most likely to miss
neighbors. Routing them through standard HNSW at width $\mathit{efC}$
keeps the highest risk insertions on the baseline path. In practice,
this is why the observed recall of $G_N^{\mathrm{ws}}$ is often much
closer to $G_N^{\mathrm{std}}(\mathit{efC})$ than the floor width
bound alone suggests.

\section{Evaluation}
\label{sec:evaluation}

We evaluate Slipstream on five temporally ordered streaming workloads (\Cref{tab:datasets}). The evaluation validates the controller model and default operating point (\Cref{sec:adaptive-calibration-section}), compares throughput--recall trade-offs against baselines (\Cref{sec:throughput-recall}), decomposes streaming time (\Cref{sec:time-breakdown-section}), tests sensitivity to parameters (\Cref{sec:param-sensitivity}), and reports ablations and memory usage (\Cref{sec:ablation,sec:memory-usage}).

\textit{Setup.} All experiments were conducted on CloudLab c220g5 nodes equipped with two Intel Xeon Silver 4114 10-core CPUs (20 cores total at 2.20~GHz), 192~GB ECC DDR4-2666 memory, one 1~TB 7200~RPM 6G SAS HDD, and one Intel DC S3500 480~GB 6G SATA SSD. All code is written in C++ compiled with g++ 12, totaling over 3.5K lines.

\textit{Datasets.} We use five streaming video embedding workloads that preserve the original arrival order: Kinetics~\cite{kay2017kineticshumanactionvideo}, BDD100K~\cite{bdd100k}, Epic-Kitchens~\cite{epickitchens}, Ego4D~\cite{ego4d}, and VIRAT~\cite{virat}, as shown in Table~\ref{tab:datasets}. All embeddings are generated from video frames using CLIP~\cite{clip}. In every run, we initialize the index with a 100K vector base graph and ingest the remaining stream in batches of 5K embeddings. We evaluate every 10 insertion batches using 100 queries drawn sequentially from the dataset itself.

\textit{Baselines.} We compare Slipstream with four baselines: two vanilla HNSW implementations from HNSWLib~\cite{hnsw2018} and Faiss~\cite{douze2024faiss}, denoted as HNSWLib-Vanilla and Faiss-Vanilla, respectively; Ada-ef~\cite{adaptivehnsw}; and DARTH~\cite{darth}. Slipstream is evaluated with both HNSWLib and Faiss backends, denoted as Slipstream-HNSWLib and Slipstream-Faiss. Standard HNSW baselines use a fixed $\mathit{efSearch}=128$. DARTH uses Faiss with $\mathit{efSearch}=500$, target recall $0.95$, and a LightGBM predictor retrained every 10 batches using 1,000 training queries. Ada-ef uses HNSWLib with expected recall $0.95$ and recomputes its estimator between ef and recall every 10 batches.

\begin{table}[t]
\centering
\caption{Streaming datasets used in our evaluation.}
\label{tab:datasets}
\setlength{\tabcolsep}{3pt}
\begin{tabular}{@{}l l c c@{}}
\toprule
Dataset & Type & Dimension & \# Embeddings \\
\midrule
Kinetics & Action video & 512 & 2.00M \\
BDD100K & Driving video & 512 & 2.00M \\
EPIC-Kitchens & Egocentric video & 512 & 10.85M \\
Ego4D & Egocentric video & 512 & 3.88M \\
VIRAT & Surveillance video & 512 & 1.00M \\
\bottomrule
\end{tabular}
\end{table}

\subsection{Empirical Validation of the Controller Equilibrium Model}
\label{sec:adaptive-calibration-section}

This section empirically validates the controller equilibrium model by testing whether the predicted response surface matches measured behavior across driving, action, kitchen, egocentric, and surveillance streams. We sweep the controller drift balance $\beta^*$ and threshold $T$, measure the resulting segment averaged insertion width $\bar e$, and compare the observed surfaces with the form derived in \Cref{eq:model3-derived}. The goal is to test whether the model captures the qualitative and quantitative controller response across heterogeneous workloads.

On each workload we sweep a $5\times 3$ grid with $(\Delta_{\!\uparrow}, \Delta_{\!\downarrow})\in\{(5,1),\allowbreak(4,1),\allowbreak(2,1),\allowbreak(1,1),\allowbreak(1,2)\}$, i.e.\ $\beta^* \in \{\tfrac{1}{6},\tfrac{1}{5},\tfrac{1}{3},\tfrac{1}{2},\tfrac{2}{3}\}$, and $T\in\{0.5,\allowbreak 1.0,\allowbreak 1.5\}$. Each run consumes the leading $1$M~insertions of the stream with $R=2$ fixed and runs on a single thread, making the controller trajectory and insertion cost signal deterministic. For every grid point, we record the segment averaged width $\bar e$, insertion time, and $\mathrm{recall}@10$ at $\mathit{efSearch}\in\{32,64,128,256\}$.

The measured response surfaces are consistent with the derived model across workloads. Refitting the power law form in \Cref{eq:model3-derived},
\begin{equation}
  \bar{e}(\beta^*, T)
  \;=\; e_{\min} + A\,(\beta^*)^{-\alpha_1}\,T^{-\alpha_2},
  \label{eq:model3}
\end{equation}
independently on each workload yields the coefficients in \Cref{tab:model3-params}, with $R^2 > 0.97$ in linear space on all five streams (\Cref{fig:adaptive_goodness}). Although the prefactor $A$ varies by roughly two orders of magnitude, reflecting differences in dataset density and segment structure, the exponents remain large on every workload ($\alpha_1\!\ge\!1.9$ and $\alpha_2\!\ge\!3.4$). In particular, the large exponents indicate strong sensitivity to both drift balance and threshold, as predicted by the local linearization in log space in \Cref{eq:loglin-equilibrium}. This supports using \Cref{eq:model3-derived} as an approximation of the controller response that is independent of the workload. The fitted coefficients are descriptive: they validate the response surface, but are not used as controller parameters for individual workloads in the later evaluation.

\begin{figure*}[t]
  \centering
  \includegraphics[width=\textwidth]{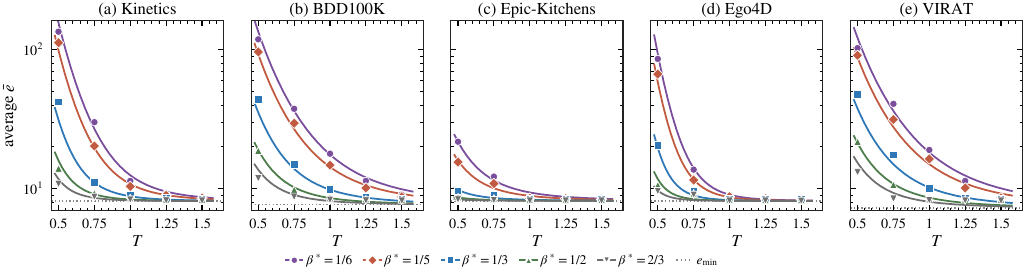}
  \caption{Empirical validation of the controller equilibrium model across five streams. Each panel compares measured and predicted segment averaged insertion width $\bar{e}$ over the $(\beta^*,T)$ controller grid.}
  \label{fig:adaptive_goodness}
\end{figure*}

\begin{table}[t]
  \centering
  \setlength{\tabcolsep}{4pt}
  \caption{Descriptive fits of the controller equilibrium model for each dataset
  (\Cref{eq:model3}) and the measured equilibrium width
  $\bar{e}_{\text{eval}}$ at the fixed global controller setting
  $(\Delta_{\!\uparrow}{=}4, \Delta_{\!\downarrow}{=}1,\,T{=}1.5)$ used in the remaining experiments.}
  \begin{tabular}{lrrrrrr}
    \toprule
    Dataset & $e_{\min}$ & $A$ & $\alpha_1$ & $\alpha_2$ & $R^2$ &
    $\bar{e}_{\text{eval}}$ \\
    \midrule
    Kinetics      & 8.16 & 0.036 & 2.68 & 5.08 & 0.969 & 8.59 \\
    BDD100K       & 7.70 & 0.193 & 2.20 & 3.68 & 0.981 & 8.77 \\
    Epic-Kitchens & 8.20 & 0.005 & 3.05 & 3.47 & 0.989 & 8.41 \\
    Ego4D         & 8.19 & 0.005 & 2.88 & 6.58 & 0.985 & 8.41 \\
    VIRAT         & 7.29 & 0.360 & 1.91 & 3.38 & 0.966 & 8.61 \\
    \bottomrule
  \end{tabular}
  \label{tab:model3-params}
\end{table}

After validating the response model, we use the same sweep to justify a single fixed global controller setting for the remaining experiments: $(\Delta_{\!\uparrow}, \Delta_{\!\downarrow},T,R)=(4,1,1.5,2)$. This setting is chosen from qualitative plateau behavior in the measured response surface, rather than by fitting a separate optimum for each workload. Along the $T{=}1.5$ row, insertion cost lies within $1\%$ of the grid minimum on all five workloads, whereas $T{=}0.5$ and $T{=}1.0$ require between $2$ and $3{\times}$ higher insertion cost on the harder workloads, BDD100K and Kinetics, for negligible recall gain. The large descriptive $\alpha_2$ values in \Cref{tab:model3-params} explain this behavior: the equilibrium width is highly elastic in the threshold, so a more permissive $T$ captures most of the available cost reduction. After fixing $T{=}1.5$, the five $\beta^*$ values produce insertion cost and recall within $1\%$ of one another on every dataset, so we use the conservative global setting $\beta^*=1/5$ ($\Delta_{\!\uparrow}{=}4$, $\Delta_{\!\downarrow}{=}1$). The measured equilibrium width $\bar e_{\mathrm{eval}}$ lies within a factor of $\sim\!1.4$ of the floor $e_{\min}$ for every dataset, and the realized recall stays within $0.01$ of the recall of the cheapest grid point at every standard $\mathit{efSearch}$, confirming that the chosen point lies on the recall plateau rather than in a degraded regime.

\subsection{Throughput--Recall Curve}
\label{sec:throughput-recall}

Figure~\ref{fig:throughput_recall} reports the throughput--recall@10 trade-off across five video embedding datasets, where each point corresponds to one operating configuration from the streaming sweep. Throughput is measured as inserted embeddings per second during the streaming phase and includes online insertion, offline maintenance, and query-time overheads. For a consistent sweep, all six methods, including Slipstream-HNSWLib, Slipstream-Faiss, and the four baselines, vary $\mathit{efC}$ with $M{=}16$ while keeping the query $\mathit{efSearch}$ fixed; Slipstream additionally uses $\mathit{efR}{=}32$ for adaptive reuse.

Across all datasets, Slipstream is the only method that combines tens of thousands of insertions per second with recall@10 above 0.95. This places the comparison in the intended regime: the gains come from reducing repeated insertion search rather than from relaxing the recall target. The configuration with the Faiss backend achieves the highest throughput on every dataset, reaching 42.5K, 45.4K, 40.8K, 39.0K, and 52.8K embeddings/s on Kinetics, BDD100K, Epic-Kitchens, Ego4D, and VIRAT, respectively. These points maintain recall@10 between 0.954 and 0.995. Compared with the fastest non-Slipstream baseline point on each dataset, Slipstream-Faiss improves throughput by 11.5--30.8x while remaining above the 0.95 recall threshold. This indicates that the main benefit is not simply choosing a lower quality operating point: Slipstream exposes high throughput configurations that remain within the high recall regime.

The Slipstream variant with the HNSWLib backend shows the same trend, although at lower throughput than the Faiss implementation. It reaches 29.9K embeddings/s on Kinetics, 15.1K on BDD100K, 11.2K on Epic-Kitchens, 12.3K on Ego4D, and 27.2K on VIRAT, all above 0.95 recall@10. Relative to the fastest baseline using HNSWLib, this corresponds to 3.2--23.1x higher throughput. The gap is largest on Kinetics and VIRAT, where reuse from the previous insertion preserves high recall while avoiding much of the insertion cost of repeated graph search.

Baselines exhibit the expected accuracy and throughput trade-off: increasing construction effort improves recall but quickly reduces streaming throughput. For example, vanilla HNSW and Faiss-HNSW can approach or reach perfect recall, but their fastest points are typically only 1--3.5K embeddings/s. Ada-ef and DARTH are designed to maintain high recall by adapting search effort as the index evolves. Under our streaming accounting, however, the adaptation work is charged during ingestion. As a result, their throughput remains close to or below the vanilla baselines on several datasets even though their recall is strong. The time breakdown in \Cref{sec:time-breakdown-section} attributes this behavior to periodic estimator recomputation in Ada-ef and predictor retraining in DARTH rather than to poor retrieval quality. In contrast, Slipstream reduces repeated search work directly on the insertion path and uses a lightweight controller, so its curve remains one to two orders of magnitude higher in throughput.

\begin{figure*}[t]
  \centering
  \includegraphics[width=\textwidth]{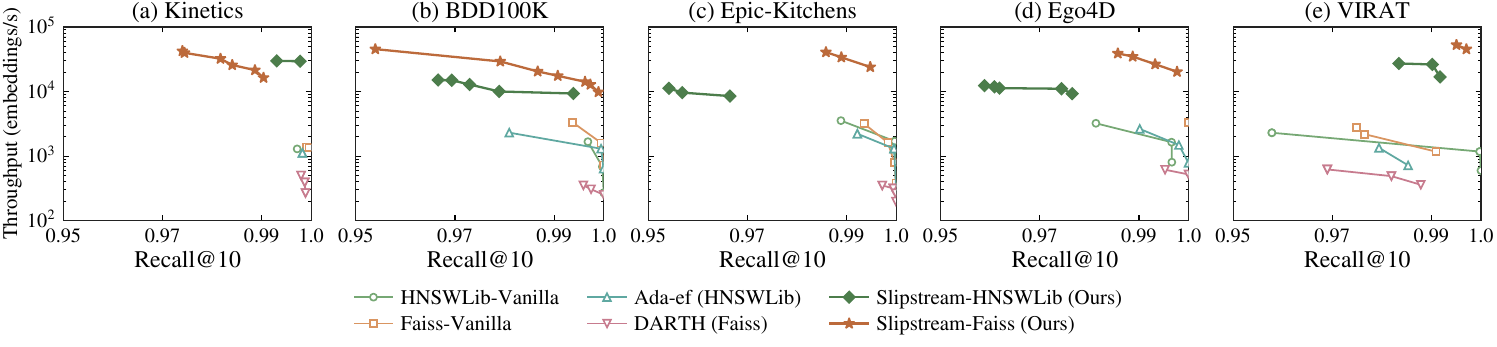}
  \caption{Streaming throughput versus recall@10 across the five video embedding workloads. Slipstream moves the frontier upward while remaining in the high recall regime.}
  \label{fig:throughput_recall}
\end{figure*}

\begin{figure*}[t]
  \centering
  \includegraphics[width=\textwidth]{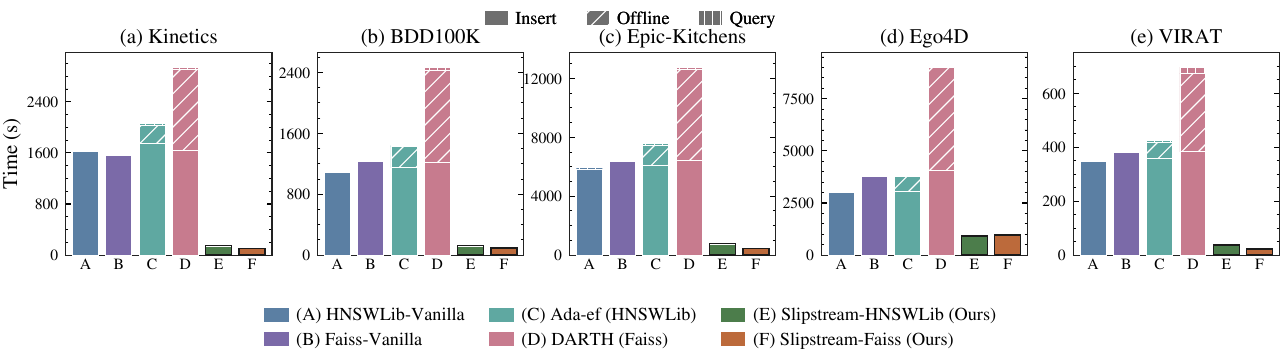}
  \caption{Operation time broken down into insertion, offline maintenance, and query processing. In this recall@10 streaming workload, DARTH and Ada-ef have insertion costs close to their vanilla backends, while predictor retraining or estimator recomputation adds maintenance overhead.}
  \label{fig:time-breakdown}
\end{figure*}

\subsection{Time Breakdown}
\label{sec:time-breakdown-section}

Figure~\ref{fig:time-breakdown} decomposes online elapsed time into insertion, offline maintenance, and query processing, excluding the cold start build. The goal is streaming cost accounting: it shows where each method spends time under our recall@10 workload. This setting is insertion dominated, so methods that mainly adapt query search effort, such as DARTH and Ada-ef, have limited room to offset insertion cost. Their mechanisms can be more beneficial when query cost is larger, for example at larger $K$.

Three patterns stand out. First, insertion dominates both vanilla baselines: HNSWLib-Vanilla and Faiss-Vanilla spend $99.6\%$--$99.8\%$ of online time in insertion. DARTH and Ada-ef leave the underlying HNSW insertion procedure largely unchanged, so their insertion components remain close to the corresponding vanilla backend. Second, DARTH spends a large fraction of runtime on offline maintenance ($42\%$ on VIRAT, $43\%$ on Kinetics, $48\%$--$49\%$ on BDD100K and EPIC-Kitchens, and $55\%$ on Ego4D), reflecting periodic retraining of its learned predictor. Third, Ada-ef has a smaller but visible maintenance component ($14\%$--$19\%$), due to periodic recomputation of the estimator between ef and recall. These costs are part of their recall control mechanisms rather than implementation artifacts, but under streaming ingestion they are paid in addition to vanilla-like insertion cost.

Slipstream reduces total online runtime by targeting the insertion bottleneck itself. Slipstream-Faiss takes $107.5$s on Kinetics, $92.8$s on BDD100K, $23.5$s on VIRAT, $441.0$s on EPIC-Kitchens, and $480.3$s on Ego4D, giving $14.6\times$, $13.3\times$, $16.3\times$, $14.4\times$, and $3.9\times$ reductions over Faiss-Vanilla. Slipstream-HNSWLib follows the same ordering but remains slower than Slipstream-Faiss. Unlike DARTH and Ada-ef, Slipstream reuses search state from the previous insertion and uses a lightweight controller to adjust the local insertion width, reducing the repeated search work that dominates HNSW construction. Query time is small across methods ($\le 4\%$ for vanilla and Slipstream-Faiss, up to $\sim\!11\%$ for Slipstream-HNSWLib and DARTH), so total streaming time is dominated by insertion.

\subsection{Parameter Sensitivity}
\label{sec:param-sensitivity}

\begin{figure*}[t]
    \centering
    \includegraphics[width=\textwidth]{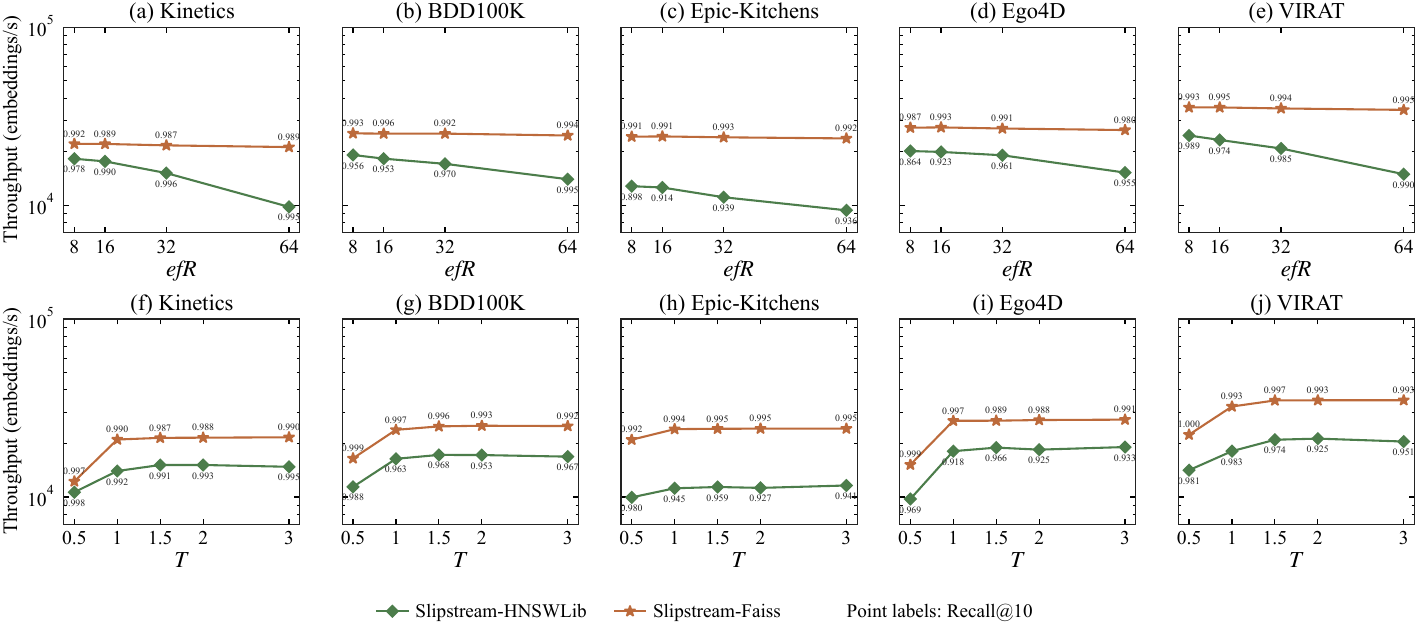}
    \caption{
    Sensitivity to Slipstream parameters. Columns correspond to the five workloads. The first row sweeps the initial warm start width $\mathit{efR}$, and the second row sweeps the controller threshold $T$. The y-axis reports streaming throughput, and labels next to points report recall@10.}
    \label{fig:param-sensitivity}
\end{figure*}

We next evaluate whether Slipstream depends on a narrow choice of its own parameters. We sweep the initial warm start width $\mathit{efR}\in\{8,16,32,64\}$ and the controller threshold $T\in\{0.5,1.0,1.5,2.0,3.0\}$. When sweeping $\mathit{efR}$, we fix $T=1.5$; when sweeping $T$, we fix $\mathit{efR}=32$. All other settings use the fixed global configuration justified in Section~\ref{sec:adaptive-calibration-section}.

Figure~\ref{fig:param-sensitivity} shows that Slipstream-Faiss is not sensitive to the initial value $\mathit{efR}$. Across the five workloads, changing $\mathit{efR}$ from $8$ to $64$ changes Faiss backend throughput by at most $4.3\%$ on any workload, while recall@10 stays at or above $0.980$. This is expected because $\mathit{efR}$ only initializes the active width in the adaptive version; after a few updates, the controller moves the active width toward the low cost region. Slipstream-HNSWLib shows a clearer recall and throughput trade-off: as $\mathit{efR}$ increases from $8$ to $64$, the geometric mean throughput drops from $18.6$K to $12.4$K embeddings/s, while mean recall@10 rises from $0.937$ to $0.974$.

The sweep over $T$ shows the main controller effect. A small threshold, $T=0.5$, makes the controller keep larger search widths and therefore lowers throughput. For Slipstream-Faiss, increasing $T$ from $0.5$ to $1.5$ raises geometric mean throughput from $17.0$K to $26.1$K embeddings/s, while mean recall@10 remains above $0.992$. For $T\ge 1.5$, the Faiss curves are nearly flat, which indicates that the controller has reached a stable low width region. Slipstream-HNSWLib follows the same broad pattern, with the best balance near the calibrated value $T=1.5$. Overall, the results show that the default setting is stable: moderate changes to $\mathit{efR}$ and $T$ do not remove Slipstream's throughput gains, and the flat regions indicate robustness rather than a hidden tuning point.

\begin{table}[t]
  \centering
  \footnotesize
  \setlength{\tabcolsep}{3pt}
  \caption{\textbf{Ablation A: component decomposition.} Each cell reports insert throughput ($10^3$ embeddings/s) / recall@10. Rows progress from vanilla HNSW to warm-start, fallback ratio $R{=}2$, and the full adaptive controller. Bold marks our full method.}
  \label{tab:ablation_a}
  \resizebox{\columnwidth}{!}{%
  \begin{tabular}{l c c c c c}
    \toprule
    Configuration & Kinetics & BDD100K & Epic-Kitchens & Ego4D & VIRAT \\
    \midrule
    Vanilla ($\mathit{efC}{=}200$) & 8.2/.995 & 14.6/.998 & 14.0/.999 & 26.7/.998 & 24.4/.994 \\
    Vanilla ($\mathit{efC}{=}32$) & 41.6/.947 & 56.4/.878 & 54.9/.904 & 102.3/.922 & 60.4/.926 \\
    \midrule
    Warm-start only & 7.9/.974 & 10.4/.958 & 9.9/.966 & 12.3/.971 & 10.0/.983 \\
    \quad $+$ Fallback ($R{=}2$) & 25.9/.984 & 27.8/.977 & 28.6/.989 & 32.5/.973 & 41.8/.991 \\
    \quad $+$ Adaptive (full) & \textbf{29.7/.947} & \textbf{32.8/.946} & \textbf{33.3/.956} & \textbf{49.8/.947} & \textbf{44.9/.970} \\
    \bottomrule
  \end{tabular}}
\end{table}

\begin{table}[t]
  \centering
  \footnotesize
  \setlength{\tabcolsep}{3pt}
  \caption{\textbf{Ablation B: fallback ratio $R$ sweep.} Each cell reports Slipstream throughput ($10^3$ embeddings/s) / recall@10.}
  \label{tab:ablation_b}
  \begin{tabular}{l ccc}
    \toprule
    Dataset & $R{=}1$ & $R{=}2$ & $R{=}\infty$ \\
    \midrule
    Kinetics & 22.8/.977 & 28.8/.951 & 14.0/.903 \\
    BDD100K & 28.8/.984 & 32.3/.949 & 16.4/.912 \\
    Epic-Kitchens & 29.4/.982 & 33.4/.957 & 19.5/.861 \\
    Ego4D & 44.3/.968 & 51.3/.948 & 26.7/.920 \\
    VIRAT & 40.9/.983 & 45.7/.965 & 18.9/.942 \\
    \bottomrule
  \end{tabular}
\end{table}

\begin{table*}[t]
\centering
\small
\setlength{\tabcolsep}{6pt}
\caption{Memory usage across 5 datasets.}
\label{tab:memory_combined}
\begin{tabular}{lrrrrr}
\toprule
\textbf{Method} & \textbf{Kinetics400} & \textbf{BDD100K} & \textbf{EPIC-Kitchens} & \textbf{Ego4D} & \textbf{VIRAT} \\
               & $N{=}2.00M$ & $N{=}2.00M$ & $N{=}10.85M$ & $N{=}3.88M$ & $N{=}1.00M$ \\
\midrule
\multicolumn{6}{c}{(a) Bytes per embedding} \\
\midrule
HNSWLib-Vanilla & 2292.5 & 2292.5 & 2288.5 & 2328.6 & 2295.7 \\
Faiss-Vanilla & 2204.5 & 2204.4 & 2193.2 & 2235.0 & 2219.4 \\
Ada-ef (HNSWLib) & 2295.9 & 2295.9 & 2266.5 & 2330.4 & 2303.3 \\
DARTH (Faiss) & 2236.9 & 2235.5 & 2231.7 & 2269.7 & 2240.5 \\

Slipstream-HNSWLib (Ours) & 2505.3 & 2505.3 & 2391.4 & 2399.1 & 2683.0 \\
Slipstream-Faiss (Ours) & 2200.0 & 2200.6 & 2197.1 & 2234.6 & 2209.1 \\

\midrule
\multicolumn{6}{c}{(b) Memory amplification (final RSS / streamed bytes; ideal=1.0)} \\
\midrule
HNSWLib-Vanilla & 1.119 & 1.119 & 1.117 & 1.117 & 1.121 \\
Faiss-Vanilla & 1.076 & 1.076 & 1.071 & 1.072 & 1.084 \\
Ada-ef (HNSWLib) & 1.121 & 1.121 & 1.107 & 1.118 & 1.125 \\
DARTH (Faiss) & 1.092 & 1.092 & 1.090 & 1.089 & 1.094 \\

Slipstream-HNSWLib (Ours) & 1.223 & 1.223 & 1.168 & 1.151 & 1.310 \\
Slipstream-Faiss (Ours) & 1.074 & 1.075 & 1.073 & 1.072 & 1.079 \\

\midrule
\multicolumn{6}{c}{(c) Index RSS (MB)} \\
\midrule
HNSWLib-Vanilla & 4373 & 4373 & 23689 & 8466 & 1908 \\
Faiss-Vanilla & 4205 & 4205 & 22702 & 8126 & 1845 \\
Ada-ef (HNSWLib) & 4379 & 4379 & 23462 & 8473 & 1915 \\
DARTH (Faiss) & 4265 & 4264 & 23102 & 8252 & 1862 \\

Slipstream-HNSWLib (Ours) & 4779 & 4779 & 24753 & 8722 & 2230 \\
Slipstream-Faiss (Ours) & 4196 & 4197 & 22742 & 8124 & 1836 \\

\bottomrule
\end{tabular}
\end{table*}

\subsection{Ablation Studies}
\label{sec:ablation}

We complement the end-to-end comparison with two ablations: one that isolates the contribution of each ingredient in Slipstream, and one that sweeps the fallback ratio $R$ to confirm that the choice $R{=}2$ (\Cref{def:fallback}) sits at the throughput knee.  Both use the five workloads of \Cref{tab:datasets} and report at $\mathit{efSearch}{=}32$ with $K{=}10$.  Where it applies, the controller uses $(\Delta_{\!\uparrow}, \Delta_{\!\downarrow},T,e_{\min})=(4,1,1.5,8)$ from \Cref{sec:adaptive-calibration-section}.

\Cref{tab:ablation_a} progressively adds each ingredient of Slipstream to a vanilla HNSW.  Two vanilla baselines bracket the design space: at $\mathit{efC}{=}200$ vanilla reaches the recall ceiling ($\ge .99$ on every workload) but is the slowest configuration ($8.2$--$26.7$K embeddings/s), while naively shrinking the construction beam to $\mathit{efC}{=}32$ buys a $2.5$--$5.1{\times}$ speedup at the cost of $5$--$12$ percentage points of recall, confirming that uniform budget reduction alone is not viable.  Warm-starting from the cached candidate set of the previous insertion recovers recall to $.958$--$.983$ but its throughput remains at the vanilla level
($7.9$--$12.3$K), because in the absence of a fallback every insert still pays a full $\mathit{efR}$ probe regardless of seed quality. Adding the fallback at $R{=}2$ is the single largest contribution: throughput jumps $2.6$--$4.2{\times}$ over warm-start alone, and recall \emph{improves} on every dataset (by up to $+2.3$ pp on Epic), since the fallback intercepts precisely the long-tail $\lambda$ events where the cached seed would otherwise misroute. The adaptive controller adds a further $7$--$53\%$ throughput on top of the warm-start-plus-fallback configuration at a cost of $2$--$4$ pp of recall, with the largest gains on streams with long stable runs that let $\mathit{efR}$ contract aggressively (Ego4D $+53\%$, Kinetics $+15\%$) and the smallest on workloads already operating near the controller floor $e_{\min}$ (VIRAT $+7\%$).

\Cref{tab:ablation_b} sweeps $R\in\{1,2,\infty\}$ for Slipstream. $R{=}2$ is the throughput maximum on all five workloads.  Below $R{=}2$, more than $20\%$ of inserts hit the fallback path and therefore pay the full $\mathit{efC}$ cost, with only marginal recall benefit because the warm-start seed was already locating the same neighborhoods.  Above $R{=}2$ the degradation is sharper: by $R{=}\infty$ (no fallback) throughput collapses by $1.7$--$2.4{\times}$, and recall drops most on streams with strong drift, with the Epic run falling to $.861$ ($-9.6$ pp below $R{=}2$) because scene changes are still warm-started from a stale anchor.

\subsection{Memory Usage}
\label{sec:memory-usage}

We report memory usage to verify that Slipstream's ingestion speedup does not rely on additional persistent index state. For each method, we measure process-resident memory after the streaming build completes, after raw input vectors and per-batch staging buffers are released, and with query and ground-truth evaluation structures excluded. \Cref{tab:memory_combined} reports bytes per vector, memory amplification, and absolute Resident Set Size (RSS) in megabytes.

On the Faiss backend, Slipstream introduces no measurable memory overhead beyond the matched vanilla baseline. Across the four $512$-dimensional workloads, Slipstream-Faiss stays within $0.5\%$ of Faiss-Vanilla in bytes per vector. This matches the design of the method: Slipstream changes how insertion search is initialized, but it does not change the node set, graph degree, or upper-layer hierarchy of the resulting HNSW index. Its cache, anchor, and controller counters are transient or negligible compared with the final graph.

On the HNSWLib backend, Slipstream shows a modest memory increase relative to HNSWLib-Vanilla, from $4.5\%$ on EPIC-Kitchens to $16.9\%$ on VIRAT in bytes per vector. Since the same algorithm shows no measurable overhead on Faiss, we attribute this difference to backend-specific bookkeeping in our HNSWLib port rather than to the algorithmic state required by Slipstream. Overall, the memory results support the intended conclusion: Slipstream improves ingestion throughput without adding a persistent auxiliary index or large memory structure.

\section{Conclusion}
\label{sec:conclusion}

This paper presented Slipstream, a method for reducing the cost of repeated index construction in streaming graph ANNS. Slipstream exploits continuity in vector streams by starting each new insertion from promising candidates discovered by previous insertions instead of restarting from the graph entry point. It evaluates distinct subsets of starting candidates with a proximity ratio and uses an adaptive controller to narrow or widen the insertion beam width according to stream stability. We further developed an abstract model and theoretical bounds to characterize when this reuse remains reliable. Implemented on HNSW using Faiss and HNSWLib, Slipstream achieves up to 30.8$\times$ higher end-to-end throughput than four baselines on five streaming vector datasets while maintaining at least 0.95 recall@10.



\bibliographystyle{ACM-Reference-Format}
\bibliography{reference}

\end{document}